\newtheorem{definition}{Definition}
\newtheorem{proposition}{\textbf{Proposition}}
\newtheorem{lemma}{Lemma}
\newtheorem{theorem}{Theorem}
\newcommand{\cclass}[1]{\ensuremath{\mbox{\textup{#1}}}\xspace}
\newcommand{\NP}{\cclass{NP}}
\newcommand{\p}{\text{poly}}
\newcommand{\N}{\mathbb{N}}
\newcommand{\C}{\ensuremath{\mathcal C}\xspace}
\newcommand{\F}{\ensuremath{\mathbb F}\xspace}
\newcommand{\Q}{\ensuremath{\mathcal{Q}}\xspace}
\newcommand{\perm}[1]{\ensuremath{\text{perm}(#1)}}
\newcommand{\Oh}{\mathcal{O}}
\title{Abusing the Tutte Matrix: \\An Algebraic Instance Compression for the K-set-cycle Problem}
\author{Magnus Wahlstr\"om\\
Max-Planck-Institut f\"ur Informatik,
Saarbr\"ucken, Germany \\
\texttt{wahl@mpi-inf.mpg.de}}
\begin{document}

\maketitle

\begin{abstract}
We give an algebraic, determinant-based algorithm for the~\textsc{$K$-Cycle} problem,
i.e., the problem of finding a cycle through a set of specified elements. Our approach
gives a simple FPT algorithm for the problem, matching the~$\Oh^*(2^{|K|})$ running time
of the algorithm of Bj\"orklund et al.~(SODA, 2012). Furthermore, our approach is open for
treatment by classical algebraic tools (e.g., Gaussian elimination), and we show that it
leads to a \emph{polynomial compression} of the problem, i.e., a polynomial-time reduction
of the~\textsc{$K$-Cycle} problem into an algebraic problem with coding size~$\Oh(|K|^3)$.
This is surprising, as several related problems (e.g.,~\textsc{$k$-Cycle} and the
\textsc{Disjoint Paths} problem) are known not to admit such a reduction unless the
polynomial hierarchy collapses. Furthermore, despite the result, we are not aware of
any \emph{witness} for the~\textsc{$K$-Cycle} problem of size polynomial in~$|K|+\log n$,
which seems (for now) to separate the notions of polynomial compression and polynomial
kernelization (as a polynomial kernelization for a problem in \NP necessarily implies a
small witness). 
\end{abstract}

\section{Introduction}

Parameterized complexity~\cite{DowneyF98,FlumG06} is one of the major approaches for
dealing with \NP-hard problems. In this setting, the input is associated with a
\emph{parameter}~$k$, usually (but not exclusively) either a parameter related to the
solution size, or a structural parameter such as treewidth; the fundamental assumption is that
problems with a smaller parameter value will be easier than general instances. 
The critical notion is that of an \emph{FPT} algorithm, which runs in time~$f(k)\cdot \p(n)$
for some~$f(k)$ where~$\p(n)$ is independent of~$k$, i.e., the combinatorial explosion is
confined to the parameter~$k$. This notion has lead to a large number of interesting
algorithmic principles; for some surveys, see, e.g., the Festschrift of Mike
Fellows~\cite{FellowsFestschrift}. 

One of the most vibrant parts of parameterized complexity in recent years is the subfield
of \emph{kernelization}. A kernelization is one of the basic approaches for creating FPT
algorithms: It is an algorithm which runs in time polynomial in both~$k$ and~$n$, which
reduces the size of the instance (e.g., via reduction rules such as ``remove a vertex
shown not to be required by the solution'') if the size is larger than some~$f(k)$. 
Additionally, beyond being a design paradigm for FPT algorithms, it has been observed that the notion
can be a good way to formalize effective \emph{instance simplification}, e.g.,
preprocessing with a performance guarantee. A \emph{polynomial kernel}, then, is a
polynomial-time procedure which takes an input instance, with parameter~$k$, and produces
an output instance of size at most~$\p(k)$, regardless of the value of~$n$, without
changing the problem status. Great interest has been taken in recent years in the question
of which problems (and which problem parameterizations) admit polynomial kernels. 
This was sparked by the creation of a lower bounds framework by Bodlaender et
al.~\cite{BodlaenderDFH09} and Fortnow and Santhanam~\cite{FortnowS11}. These results
provided a way to exclude the existence of a polynomial kernel, under the hypothesis that
the polynomial hierarchy does not collapse. Later refinements and applications of this
framework can be found in,
e.g.,~\cite{DellvM10,BodlaenderJK11,DellM12,HermelinW12,Drucker12,DomLS09,CyganKPPW12}.  
Significant progress has also been made on the positive side; for a few examples,
see~\cite{BodlaenderFLPST09,Thomasse10,FominLST10}. A recent trend, relevant to the
current paper, is the application of \emph{algebraic} tools to kernelization,
e.g.,~\cite{KratschW12a,KratschW12b}. (See related work, below.)

Sometimes, the results found by these investigations can be quite surprising. As an
example, consider the problems \textsc{Vertex Cover} (find a set of at most~$k$ vertices
in a graph which covers all edges, i.e., a \emph{vertex cover} of size at most~$k$) and
\textsc{Connected Vertex Cover} (find a vertex cover of size at most~$k$ which
additionally is connected). The former is one of the most well-studied problems in
theoretical computer science. In terms of parameterized complexity, it can be solved in
time~$\Oh^*(2^k)$ by a very simple algorithm, and in time~$\Oh^*(1.28^k)$ by more
involved means~\cite{ChenKX10}. It has a simple~$2$-approximation, and a kernel of~$2k$
vertices by the famous Nemhauser-Trotter theorem~\cite{NemhauserT75}. On the other hand,
if the vertex cover is required to be connected, then the problem still has a simple
greedy~$2$-approximation, an~$\Oh^*(2^k)$-time FPT algorithm~\cite{CyganNPPvRW11}, and, as
shown by Dom et al.~\cite{DomLS09}, no polynomial kernel unless the polynomial hierarchy
collapses.

As another example, consider the following three problems. Given a graph~$G$, find (a) a
cycle with at least~$k$ vertices (the \textsc{$k$-Cycle} problem); (b) a cycle passing
through every element of a given set~$K$,~$|K|=k$ (the \textsc{$K$-Cycle} problem); or (c)
a cycle passing through every element of~$K$, which furthermore passes the elements in a
specified order (which we may dub the \textsc{Ordered $K$-Cycle} problem). Which of these
seem more or less general? Which, if any, seems most likely do admit efficient instance
simplification? 

Let us make a quick review of known FPT and kernelization results for these problems.
All are \NP-hard; in the first two problems, setting~$k=n$ yields the \textsc{Hamiltonian
  Cycle} problem. The \textsc{$k$-Cycle} problem is closely related to \textsc{$k$-Path}
(the problem of finding a path of length at least~$k$), and there is by now a variety of
interesting techniques that can be used to solve it in~$2^{\Oh(k)}\p(n)$ time, from the
seminal color-coding technique of Alon et al.~\cite{AlonYZ95}, via the multilinear
detection of Koutis~\cite{Koutis08} (see also Williams~\cite{Williams09}), to the
recent~$\Oh^*(1.66^n)$-time \textsc{Hamiltonian Cycle} algorithm of Bj\"orklund~\cite{Bjorklund10b}, 
which was adapted to a parameterized setting in~\cite{BjorklundHKK10}.  \textsc{Ordered
  $K$-Cycle} is equivalent to the well-known problem \textsc{Disjoint Paths}, where the
input is~$k$ pairs of vertices~$(s_i, t_i)$, and the question is if we can connect all
pairs with pairwise vertex-disjoint paths. This problem seems much more challenging. 
Robertson and Seymour, in the context of the graph minors programme, showed that it is
FPT; more specifically, that it can be solved in time~$\Oh(n^3)$ for every
fixed~$k$~\cite{RobertsonS95}. Kawarabayashi et al.~improved this to~$\Oh(n^2)$ for every
fixed~$k$~\cite{KawarabayashiKR12}. However, the algorithms are in both cases very
involved, and the dependency of the running time on~$k$ is hard to pin down exactly, but
at the very least multiply exponential. As for polynomial kernelization, both are
infeasible: \textsc{$k$-Path} and \textsc{$k$-Cycle} were among the first problems to
which the lower bounds framework was applied, and \textsc{Disjoint Paths} was addressed
in~\cite{BodlaenderTY11}. In both cases, the conclusion is that neither problem allows a
polynomial kernelization (or even a polynomial-time compression into size~$\p(k)$) unless
the polynomial hierarchy collapses. 

The \textsc{$K$-Cycle} problem, in turn, may intuitively seem to be closer in nature to
the latter problem than the former -- e.g., it is a terminal connectivity problem,
parameterized by the number of terminals, and there is no obvious relation between the
parameter and the size of the solution. Indeed, the problem can be solved via applications
of the \textsc{Disjoint Paths} algorithm, and Kawarabayashi  
solved the problem in time~$2^{2^{k^{10}}}\p(n)$ using graph minors-type graph structural
reasoning~\cite{Kawarabayashi08}. However, recently, Bj\"orklund et
al.~\cite{BjorklundHT12} solved~\textsc{$K$-Cycle} using 
an approach much closer to those of the cited~\textsc{$k$-Path} algorithms: they define a
large polynomial, which can be evaluated in~$2^k\cdot\p(n)$ time, and which, when evaluated
over a field of characteristic two, is non-zero if and only if the instance is positive.
The result then follows from an application of the Schwartz-Zippel lemma. (In fact, they
solved the more general variant of finding a \emph{shortest} $K$-cycle.)
For kernelization, the status of~\textsc{$K$-Cycle} is so far unknown, but there are
several factors -- the lack of a small witness, the status of the related problems given
above, the apparent difficulty of the problem -- which would suggest that the answer
should be negative (i.e., that~\textsc{$K$-Cycle} should have no polynomial kernel).
As the present paper shows, this conclusion may well be mistaken.

\subparagraph*{Our results.}
We give an alternative algebraic algorithm for the \textsc{$K$-cycle} problem,
also with a running time of~$\Oh^*(2^{O(k)})$, by encoding the problem into a variant of the
Tutte matrix. More concretely, given~$G$ and~$K$ we construct a matrix~$M_G$ over 
GF$(2^\ell)$, whose entries are polynomials, and show that~$G$ has a~$K$-cycle if
and only if the determinant polynomial of~$M_G$ contains a certain type of
term. Further minor modifications of the matrix yield an algorithm with running
time~$\Oh^*(2^k)$, and a matrix structure such that careful application of partial random
evaluation and Gaussian elimination can reduce~$M_G$ to a matrix~$A$ with total coding
length~$\Oh(k^3)$, such that it can be decided from the determinant polynomial of~$A$ 
whether~$G$ has a~$K$-cycle. All in all, this yields a randomized polynomial compression
of~\textsc{$K$-Cycle} into space~$\Oh(k^3)$. The construction, and all proofs, are simple,
and we need only basic arguments about determinants and cycle covers to complete them. 

We note that our approach so far fails to provide a polynomial kernel, in the strict
sense; the reason being that the output is an instance of a different problem 
(of deciding a particular property of~$\det A$) which is not known to be in \NP,
while a kernelization requires that the output is an instance of the same problem.
This is closely related to the issue of the \emph{witness size} required
for~\textsc{$K$-Cycle}; we are not aware of a witness for either \textsc{$K$-Cycle} or
for our artificial algebraic output problem, of size~$\p(k+\log n)$. 
We consider these results quite surprising. 

\subparagraph*{Related work.}
The Tutte matrix (see Section~\ref{sec:prel}) is a skew-symmetric matrix of
indeterminates, created from the adjacency matrix of a graph~$G$, which is non-singular if
and only if~$G$ has a perfect matching~\cite{Tutte47}. This can be used to determine the
size of a maximum matching in randomized time~$\Oh(n^\omega)$~\cite{Lovasz79,MuchaS04},
where~$\omega<2.3727$ is the matrix multiplication exponent~\cite{Vassilevska12,Stothers,CoppersmithW90}. 
Mucha and Sankowski~\cite{MuchaS04} showed how to find a maximum matching in the same time.
Geelen~\cite{Geelen00} gave a deterministic polynomial-time procedure which finds a
maximum rank evaluation of the Tutte matrix (which does not lead to a competitive
deterministic matching algorithm, but may be of interest for the general question of
removing randomness due to applications of Schwartz-Zippel). 
For non-algebraic algorithms for matching, Micali and Vazirani~\cite{MicaliV80} gave an
algorithm that finds a maximum matching in general graphs in time~$\Oh(m\sqrt{n})$. 

Algebraic FPT algorithms, beyond those cited for~\textsc{$k$-Path} above, have been used
by, e.g., Lokshtanov and Nederlof~\cite{LokshtanovN10} and Cygan et al.~\cite{CyganKPPW12}.
See also Nederlof's PhD thesis~\cite{Nederlof}. More specifically, algorithms based around
determinant computations have been used by Bj\"orklund~\cite{Bjorklund10a,Bjorklund10b}.
However, we argue that the approach of the present paper leads to significantly simpler
algorithms and correctness proofs than before. Algebraically based \emph{kernelizations},
in particular using tools of matroid theory, have been given
in~\cite{KratschW12a,KratschW12b}. Related to the present work, it is interesting to note
that the result of~\cite{KratschW12a} was a pure compression, albeit within \NP (the
problem \textsc{Odd Cycle Transversal} was encoded into matroid, represented by a matrix
of total coding length~$\p(k)$), 
while~\cite{KratschW12b} gave graph-based reduction rules, significantly broadening the
applicability of the tools. A similar improvement on the tools of the present work would be highly
interesting.

\subparagraph*{Organization.}
We review some basic definitions in the next section, then Section~\ref{sec:themeat}
gives, in turn, a very simple~$\Oh^*(4^k)$-time for \textsc{$K$-Cycle}; an improvement to
an~$\Oh^*(2^k)$-time algorithm; and the Gaussian polynomial compression. 

\section{Preliminaries} \label{sec:prel}

\subparagraph*{Parameterized complexity.}
A \emph{parameterized problem} is a language~$\Q\subseteq\Sigma^*\times\N$; the second component of instances~$(x,k)$ is called the parameter (cf.~\cite{DowneyF98}). A parameterized problem is \emph{fixed-parameter tractable} (FPT) if there is an algorithm~$A$ and a computable function~$f\colon\N\to\N$ such that~$A$ decides~$(x,k)\in\Q$ in time~$f(k)|x|^{\Oh(1)}$. A \emph{kernelization} \emph{of~$\Q$} is a polynomial-time computable mapping~$K\colon\Sigma^*\times\N\to\Sigma^*\times\N\colon(x,k)\mapsto(x',k')$ such that $(x,k)\in\Q$ if and only if~$(x',k')\in\Q$ and with~$|x'|,k'\leq h(k)$ where~$h$ is a computable function;~$h$ is called the \emph{size} of the kernel and~$K$ is a \emph{polynomial kernelization} if~$h(k)$ is polynomially bounded.
A \emph{polynomial compression} is a polynomial kernelization relaxed so that the output
may be an instance of a (fixed) different language than the input language. 
This has also been called \emph{bikernel}~\cite{AlonGKSY10} and \emph{generalized kernelization}~\cite{BodlaenderDFH09}).

\subparagraph*{The Tutte matrix.}
Let~$G=(V,E)$ be a simple undirected graph with~$V=\{v_1,\ldots,v_n\}$. The \emph{Tutte
  matrix}~$A_G$ is the~$n \times n$ matrix of indeterminates such that
\[
A_G(i,j) = 
\begin{cases} x_{ij} & \text{if $v_iv_j \in E$ and $i<j$,} \\ 
-x_{ji} & \text{if $v_iv_j \in E$ and $i>j$,} \\ 
0 & \text{otherwise,}
\end{cases}
\]
where~$x_{ij}$ are distinct commuting variables. Tutte~\cite{Tutte47} showed that~$\det
A_G\neq 0$ (viewed as a polynomial) if and only if~$G$ has a perfect matching. 
Lovasz~\cite{Lovasz79} showed the applications of this type of result to randomized
algorithms. 

\subparagraph*{Determinants and cycle covers.}
We recall a few basic facts.
Let~$D=(V,E)$ be a directed graph, which may contain loops. A \emph{cycle cover} of~$D$
is a set~$\C \subseteq E$ of arcs such that every vertex in~$D$ has in- and out-degree
exactly one in~$\C$. We allow loops to be present in the cycle cover. For an undirected
simple graph~$G$, which again may contain loops, an \emph{oriented cycle cover} of~$G$ is
a cycle cover of the bidirectional graph corresponding to~$G$. (Note that this implies
that loops and isolated edges are permitted in the cycle cover, corresponding to cycles of
length one respectively two.) 

Let~$A$ be an~$n \times n$ matrix over a field of characteristic two. Then the determinant
and the permanent of~$A$ coincide:
\begin{equation}
\label{eqn:det}
\det A = \perm A = \sum_{\sigma \in S_n} \prod_{i=1}^n A(i,\sigma(i)),
\end{equation}
where~$S_n$ is the set of all permutations of~$[n]$. Let~$D$ be a directed graph on vertex
set~$V=\{v_1,\ldots,v_n\}$ such that~$v_iv_j \in E(D)$ if and only if~$A(i,j)\neq 0$
(where~$v_iv_i$ denotes a loop on the vertex~$v_i$). There is a well-known bijection
between terms of the na\"ive summation~(\ref{eqn:det}) of the determinant and cycle covers
of~$D$, as follows.

\begin{proposition}
\label{prop:cyclecovers}
Let~$D=(V,E)$ and~$A$ be as above. For a permutation~$\sigma \in S_n$,
let~$\C_\sigma=\{v_iv_{\sigma(i)}: i \in [n]\}$. If~$\C_\sigma \subseteq E$, then~$\C_\sigma$
is a cycle cover of~$D$; furthermore, this describes a bijection between cycle covers
of~$D$ and non-zero terms in the summation~(\ref{eqn:det}). 
\end{proposition}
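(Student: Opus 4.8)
The plan is to exhibit the map $\sigma \mapsto \C_\sigma$ together with an explicit inverse, and to check along the way that it respects the relevant side conditions. First I would observe that a term of the summation~(\ref{eqn:det}), namely $\prod_{i=1}^n A(i,\sigma(i))$, is non-zero precisely when every factor $A(i,\sigma(i))$ is non-zero; by the definition of $D$ this happens exactly when $v_i v_{\sigma(i)} \in E$ for every $i \in [n]$, i.e.\ when $\C_\sigma \subseteq E$. So it suffices to prove that $\sigma \mapsto \C_\sigma$ restricts to a bijection from $\{\sigma \in S_n : \C_\sigma \subseteq E\}$ onto the set of cycle covers of $D$.

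Next I would verify that $\C_\sigma \subseteq E$ implies $\C_\sigma$ is a cycle cover. The arc $v_i v_{\sigma(i)}$ is the only member of $\C_\sigma$ whose tail is $v_i$, so every vertex has out-degree exactly one in $\C_\sigma$; since $\sigma$ is a bijection, $v_j$ is the head of exactly one arc $v_i v_{\sigma(i)}$, namely the one with $i = \sigma^{-1}(j)$, so every vertex has in-degree exactly one. (Here a fixed point $\sigma(i)=i$ contributes the loop $v_i v_i$, which is allowed in a cycle cover.) Conversely, given a cycle cover $\C$ of $D$, each vertex $v_i$ is the tail of a unique arc of $\C$; writing its head as $v_{\sigma(i)}$ defines a map $\sigma\colon [n] \to [n]$. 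This $\sigma$ is injective: if $\sigma(i) = \sigma(i') = j$ then $v_i v_j$ and $v_{i'} v_j$ both lie in $\C$ and have head $v_j$, so by the in-degree condition they coincide and $i = i'$; an injection on $[n]$ is a permutation. By construction $\C_\sigma = \C$, and since $\C \subseteq E$ also $\C_\sigma \subseteq E$, so the map is onto the cycle covers. It is injective because $\C_\sigma$ determines $\sigma(i)$ as the head of its unique arc with tail $v_i$. Combining these observations yields the claimed bijection, and in particular the bijection with the non-zero terms of~(\ref{eqn:det}).

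I expect no serious obstacle here: the statement is essentially the standard correspondence between a permutation and its functional digraph, and the characteristic-two hypothesis is used only to guarantee, via~(\ref{eqn:det}), that the determinant equals the sign-free permanent summation whose terms we are counting. The only points requiring a little care are the bookkeeping around loops and the verification that the candidate inverse $\C \mapsto \sigma$ really lands in $S_n$ (and not merely in the set of all functions $[n]\to[n]$), both of which are forced by the in- and out-degree constraints in the definition of a cycle cover.
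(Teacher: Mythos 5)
Your proof is correct; the paper itself states this proposition without proof (introducing it as ``a well-known bijection''), and what you have written is precisely the standard correspondence the author has in mind, with the right care taken at the two genuinely checkable points (loops, and that the candidate inverse lands in $S_n$ via the in-degree condition). The only implicit step worth a half-sentence is that a product of non-zero entries of $A$ is non-zero, which holds because the entries are monomials in a polynomial ring over a field, an integral domain.
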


Let~$C$ be a simple cycle in a cycle cover~$\C_\sigma$. We call~$C$ \emph{reversible} if
the cycle has length at least three and for every edge~$v_iv_j \in C$, we
have~$A(i,j)=A(j,i)$; further, we call~$\C_\sigma$ reversible if it contains at least one
reversible cycle. A critical observation, both in previous and present work, is that
reversible cycle covers cancel in~(\ref{eqn:det}).

\begin{proposition}
\label{prop:reverse}
If~(\ref{eqn:det}) is computed over a field of characteristic two, then the terms
corresponding to reversible cycle covers cancel each other. 
\end{proposition}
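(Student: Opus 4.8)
The plan is to pair up the reversible cycle covers by a fixed-point-free involution $\phi$ under which the two members of a pair contribute the \emph{same} term to the na\"ive expansion~(\ref{eqn:det}); since the field has characteristic two, each pair then contributes $t+t=0$, and summing over all pairs shows that the total contribution of reversible cycle covers vanishes.

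First I would attach to each reversible cycle cover $\C_\sigma$ a canonical reversible cycle $C(\C_\sigma)$: since $\C_\sigma$ partitions $V$ into vertex-disjoint cycles, the vertex $v_i$ of least index that lies on some reversible cycle of $\C_\sigma$ lies on a unique such cycle, and I take that to be $C(\C_\sigma)$. The involution $\phi$ reverses the orientation of every arc of $C(\C_\sigma)$ and leaves every other arc of $\C_\sigma$ alone. One checks that the result $\phi(\C_\sigma)$ is again a cycle cover with a non-zero term: outside $C(\C_\sigma)$ nothing changes, and along $C(\C_\sigma)$ each entry $A(i,j)$ is replaced by $A(j,i)$, which equals $A(i,j)\ne 0$ by the definition of a reversible cycle.

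Next I would verify that $\phi$ is a fixed-point-free involution on the set of reversible cycle covers. Reversing one directed cycle of a cycle cover changes neither the partition of $V$ into cycles nor the underlying undirected edge set of any cycle, and reversibility of a cycle depends only on that undirected edge set together with $A$; hence $\phi(\C_\sigma)$ is again reversible, its reversible cycles are the same undirected subgraphs as those of $\C_\sigma$, and the canonical choice is preserved, so $C(\phi(\C_\sigma))=C(\C_\sigma)$ and $\phi(\phi(\C_\sigma))=\C_\sigma$. Because $C(\C_\sigma)$ has length at least three, reversing it genuinely alters the permutation $\sigma$ (only cycles of length at most two are fixed by reversal), so $\phi$ has no fixed point.

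Finally I would compare terms. Over characteristic two $\det A=\perm A$, so there are no signs and the term of $\C_\sigma$ is $\prod_i A(i,\sigma(i))$. If $C(\C_\sigma)=v_{i_1}v_{i_2}\cdots v_{i_\ell}v_{i_1}$, then the terms of $\C_\sigma$ and of $\phi(\C_\sigma)$ agree on all factors coming from arcs outside $C(\C_\sigma)$, and on the remaining factors they are $A(i_1,i_2)A(i_2,i_3)\cdots A(i_\ell,i_1)$ versus $A(i_2,i_1)A(i_3,i_2)\cdots A(i_1,i_\ell)$, which are equal factor by factor since $A(i_k,i_{k+1})=A(i_{k+1},i_k)$ by reversibility. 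Thus the two members of each $\phi$-orbit contribute identical terms, these cancel in characteristic two, and the proposition follows. The only point needing care is the symmetry of the pairing in the third paragraph — that $\phi$ really is an involution and not merely a map into the reversible cycle covers — but this is immediate once one notes that flipping a single cycle disturbs nothing else in the cover.
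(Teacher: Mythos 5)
Your proof is correct and follows essentially the same route as the paper's: select a canonical reversible cycle (the one containing the least-indexed vertex on any reversible cycle), reverse it to obtain a fixed-point-free involution, and observe that paired covers contribute identical terms which cancel in characteristic two. You in fact supply more detail than the paper does on the points that matter --- that the canonical choice is preserved under the map (so it really is an involution) and that the length-at-least-three condition rules out fixed points --- so nothing is missing.
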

\begin{proof}
Let~$\C_\sigma$ be a reversible cycle cover, and let~$C$ be the first reversible cycle
of~$\C_\sigma$, counted by vertex incidence (i.e., the cycles of~$\C_\sigma$ are sorted
according to the number of the earliest incident vertex). 
Let~$\C'=\C_{\sigma'}$ be the cycle cover resulting from reversing~$C$. Then this
operation creates a fix-point-free involution among the reversible cycle covers. Further,
as the terms of~(\ref{eqn:det}) corresponding to~$\sigma$ and~$\sigma'$ are identical by
definition, all terms of~(\ref{eqn:det}) corresponding to reversible cycle covers will
cancel each other out. 
\end{proof}

Thus, when reasoning about the surviving terms of~$\det A$, we only need to concern
ourselves with non-reversible cycle covers. (In particular, if~$A=A_G$ is the Tutte matrix
of a graph~$G$ over a field of characteristic two, then every cycle of length more than
two is reversible, and there are no cycles of length one; thus the non-reversible cycle
covers are exactly the perfect matchings of~$G$.) 

\subparagraph*{Schwartz-Zippel.} 
We also recall the Schwartz-Zippel lemma. 

\begin{lemma}[Schwartz-Zippel~\cite{Schwartz80,Zippel79}] Let~$P(x_1,\ldots,x_n)$ be a multivariate polynomial of
  total degree at most~$d$ over a field~$\F$, and assume that~$P$ is not identically zero. 
  Pick~$r_1, \ldots, r_n$ uniformly at random from~$\F$. Then~$\Pr (P(r_1,\ldots,r_n)=0)
  \leq d/|F|$. 
\end{lemma}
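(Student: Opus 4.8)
The plan is to prove the bound by induction on the number of variables $n$, peeling off one variable per step and reducing to the univariate case. First I would handle the base case $n = 1$: here $P$ is a nonzero univariate polynomial of degree at most $d$ over a field, hence has at most $d$ roots, so a uniformly random $r_1 \in \F$ satisfies $P(r_1) = 0$ with probability at most $d/|\F|$.

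For the inductive step I would single out the variable $x_n$ and write
\[
P(x_1,\ldots,x_n) = \sum_{i=0}^{k} x_n^{\,i}\, P_i(x_1,\ldots,x_{n-1}),
\]
choosing $k \le d$ maximal so that the coefficient polynomial $P_k$ is not identically zero; such a $k$ exists precisely because $P \not\equiv 0$, and $\deg P_k \le d - k$. Let $E$ be the event that $P_k(r_1,\ldots,r_{n-1}) = 0$. Applying the induction hypothesis to $P_k$ (which has $n-1$ variables and total degree at most $d-k$) gives $\Pr(E) \le (d-k)/|\F|$. On the complementary event $\overline{E}$, the specialization $P(r_1,\ldots,r_{n-1}, x_n)$ is a univariate polynomial in $x_n$ of degree exactly $k$ that is not identically zero, so the base-case argument bounds $\Pr\big(P(r_1,\ldots,r_n) = 0 \mid \overline{E}\big) \le k/|\F|$. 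Combining these via the trivial estimate $\Pr(A) \le \Pr(E) + \Pr(A \mid \overline{E})$,
\[
\Pr\big(P(r_1,\ldots,r_n) = 0\big) \le \frac{d-k}{|\F|} + \frac{k}{|\F|} = \frac{d}{|\F|},
\]
which completes the induction.

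I do not expect a genuine obstacle here; the two points that need care are bookkeeping of degrees — verifying $\deg P_k \le d-k$ so that the induction hypothesis applies with the correct parameter — and the use of independence: since $r_1,\ldots,r_n$ are drawn independently, conditioning on the values $r_1,\ldots,r_{n-1}$ leaves $r_n$ uniform over $\F$ and independent of the event $E$, which is exactly what licenses the univariate bound on $\overline{E}$. An equivalent and perhaps cleaner presentation would fix an arbitrary assignment to $x_1,\ldots,x_{n-1}$, bound the conditional failure probability over the choice of $r_n$, and then average over the earlier variables, making the independence usage completely explicit.
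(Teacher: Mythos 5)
Your argument is correct: this is the standard proof of the Schwartz--Zippel lemma by induction on the number of variables, and all the delicate points are handled properly --- the degree bookkeeping $\deg P_k \le d-k$ (so the induction hypothesis applies with the right parameter), the estimate $\Pr(A)\le\Pr(E)+\Pr(A\mid\overline{E})$, and the independence of $r_n$ from $r_1,\ldots,r_{n-1}$ that justifies the univariate bound on $\overline{E}$. Note, however, that the paper does not prove this lemma at all; it is stated as a classical result with citations to Schwartz and Zippel, so there is no in-paper proof to compare against. The only cosmetic remark is that uniform sampling from $\F$ presupposes $\F$ finite (or sampling from a finite subset $S\subseteq\F$, with $|S|$ replacing $|\F|$ in the bound); in the paper's application $\F=\mathrm{GF}(2^\ell)$, so your proof applies verbatim.
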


We will use this mostly, though not exclusively, for the case that~$P$ is the determinant
of a matrix over GF$(2^\ell)$. 

\subparagraph*{Detecting monomials in a polynomial.}
Our final generic ingredient is an application of inclusion-exclusion to finding certain
monomials in a polynomial over a field of characteristic two. For a polynomial~$P$ and a
monomial~$m$, we let~$P(m)$ denote the coefficient of~$m$ in~$P$. We need a way to extract
from~$P$ only those monomials divided by a certain term. 

\begin{lemma}
\label{lm:polypie}
Let~$P(x_1,\ldots,x_n)$ be a polynomial over a field of characteristic two, 
and~$T \subseteq [n]$ a set of target indices. For a set~$I \subseteq [n]$,
define~$P_{-I}(x_1,\ldots,x_n)=P(y_1,\ldots,y_n)$ where~$y_i=0$ for~$i \in I$ and~$y_i=x_i$
otherwise. Define 
\[
Q(x_1,\ldots,x_n) = \sum_{I \subseteq T} P_{-I}(x_1,\ldots,x_n).
\]
Then for any monomial~$m$ such that~$t:=\prod_{i \in T} x_i$ divides~$m$ we have~$Q(m)=P(m)$, 
and for every other monomial we have~$Q(m)=0$.
\end{lemma}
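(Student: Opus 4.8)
The plan is to argue one monomial at a time, exploiting the fact that the substitution $x_i \mapsto 0$ for $i \in I$ annihilates precisely those monomials of~$P$ that involve some variable with index in~$I$, and leaves all other monomials untouched. So first I would fix an arbitrary monomial~$m$ and let~$S(m) \subseteq [n]$ denote its support, i.e.\ the set of indices~$i$ such that~$x_i$ occurs in~$m$ with positive exponent. The single conceptual observation of the proof is then: the coefficient of~$m$ in~$P_{-I}$ equals~$P(m)$ if~$I \cap S(m) = \emptyset$, and equals~$0$ otherwise. This is immediate from the definition of~$P_{-I}$, since setting a variable to zero can only delete monomials (never merge two distinct monomials into a new nonzero one), regardless of the exponents involved.

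Next I would sum over~$I \subseteq T$ and use linearity of the ``coefficient of~$m$'' operator to interchange the summations, obtaining
\[
Q(m) \;=\; \sum_{I \subseteq T} P_{-I}(m) \;=\; \sum_{\substack{I \subseteq T \\ I \cap S(m) = \emptyset}} P(m) \;=\; \bigl|\{\, I : I \subseteq T \setminus S(m) \,\}\bigr|\cdot P(m) \;=\; 2^{\,|T \setminus S(m)|}\, P(m).
\]
Now I would invoke the characteristic-two hypothesis: in such a field~$2^k = 0$ for every~$k \ge 1$, while~$2^0 = 1$. Hence~$Q(m) = P(m)$ exactly when~$|T \setminus S(m)| = 0$, i.e.\ when~$T \subseteq S(m)$, and~$Q(m) = 0$ otherwise. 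Finally I would observe that~$T \subseteq S(m)$ is precisely the statement that~$t = \prod_{i \in T} x_i$ divides~$m$ (every variable~$x_i$ with~$i \in T$ must appear in~$m$), which is exactly the dichotomy claimed.

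I do not anticipate a genuine obstacle: the only place needing a word of care is making the support/substitution observation precise when~$P$ has high-degree or repeated-variable monomials, but since~$x_i = 0$ kills a monomial irrespective of the exponent of~$x_i$ in it, nothing is lost. It is worth noting that the case~$P(m) = 0$ is handled trivially by the displayed identity. The essential ingredient is the characteristic-two assumption, which is what collapses the usual signed inclusion--exclusion into a plain sum and makes the even-cardinality families of subsets vanish.
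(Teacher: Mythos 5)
Your proposal is correct and follows essentially the same route as the paper's own proof: both reduce to counting, for a fixed monomial~$m$ with support~$S(m)$, the subsets~$I \subseteq T$ with~$I \cap S(m) = \emptyset$, obtaining~$2^{|T \setminus S(m)|}$ copies of~$P(m)$, which survive in characteristic two exactly when~$T \subseteq S(m)$. Your write-up is if anything slightly cleaner in making the count~$2^{|T\setminus S(m)|}$ explicit rather than splitting into the two cases separately.
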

\begin{proof}
Consider a monomial~$m$ with non-zero coefficient in~$P$. Observe first that for every~$I
\subseteq [n]$, we have~$P_{-I}(m)=P(m)$ if no variable~$x_i$ with~$i \in I$ occurs in~$m$,
and~$P_{-I}(m)=0$ otherwise. Now, if~$t$ divides~$m$, then out of the~$2^{|T|}$ evaluations,
the monomial~$m$ occurs in exactly one (namely,~$I=\emptyset$). Thus,~$Q(m)=P(m)$. 
If~$t$ does not divide~$m$, let~$J=\{i \in I: x_i \text{ does not divide }m\}$,
and observe that~$P_{-I}(m)=P(m)$ for every~$I \subseteq J$. Since~$J \neq \emptyset$,
this is an even number of occurrences of the same monomial with the same coefficient,
which implies that they sum to zero. Applying this argument individually to every monomial
in~$P$ accounts for all occurrences of monomials in the sum defining~$Q$; the result
follows.
\end{proof}

We remark that we do not require~$P$ to be multilinear (although we do require~$T$ to be a
set rather than a multiset).

\section{An Algebraic FPT Algorithm}
\label{sec:themeat}

We now give our algorithm and compression for \textsc{$K$-Cycle}. Let us first fix a
definition. 

\begin{definition}
For a vertex $v \in V$, a $v$-cycle is a cycle that passes through $v$. For a set $T
\subseteq V$, a $T$-cycle is a cycle that passes through all vertices of $T$. In both
cases, the cycle may pass through further vertices, but this is not required.
\end{definition}

The problem is then formally defined as follows. 

\begin{quotation}
\textsc{$K$-Cycle}
\\ \noindent {\bf Input:} A graph $G=(V,E)$; a set $K \subseteq V$ of terminal vertices
\\ \noindent {\bf Parameter:} $k:=|K|$
\\ \noindent {\bf Question:} Is there a~$K$-cycle in $G$?
\end{quotation}

We will show two algebraic FPT algorithms for this problem, giving two ways of encoding it
into the determinant of a matrix. We then show how this implies a polynomial compression
via Gaussian elimination, into space $\Oh(k^3)$.

\subsection{Graph preprocessing}

We begin with a simple preprocessing of the graph (reducing the terminals to degree two).

\begin{lemma}
\label{lemma:degtwo}
Let $(G,K)$ be an instance of \textsc{$K$-Cycle} with $|K|>1$. We can reduce $(G,K)$ to an
equivalent instance $(G',K')$, $|K'|=|K|$, where $d(v)=2$ for every $v \in K'$, and where
$K'$ is an independent set with no common neighbours.
\end{lemma}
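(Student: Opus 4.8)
The plan is to realize the degree reduction by a local gadget at each terminal, then argue that the transformation preserves the answer and the parameter. First I would handle terminals of high degree one at a time: given $v \in K$ with neighbourhood $N(v) = \{u_1, \ldots, u_d\}$, I would delete $v$ and introduce two new vertices $v'$, $v''$ together with the edge $v'v''$, and connect $v'$ to each $u_i$ and $v''$ to each $u_i$. Replacing $v$ in $K$ by $v'$ (the new terminal), a cycle through $v'$ must use the edge $v'v''$ (since that is the only way to leave $v'$ other than going to some $u_i$, and with degree two there is exactly one such pair of incident edges to choose), so any $v'$-cycle enters $v'$ from some $u_i$, crosses to $v''$, and leaves to some $u_j$ with $j \neq i$; contracting the path $u_i v' v'' u_j$ back to the single vertex $v$ recovers a $v$-cycle in $G$, and conversely a $v$-cycle passing $u_i, v, u_j$ expands to a $v'$-cycle. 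One subtlety: the original $v$-cycle might use $u_i = u_j$ only if $v$ had a multi-edge or the cycle had length two; since $G$ is simple this cannot happen for $d(v) \geq 2$, and isolated/degenerate cases ($d(v) \le 1$, i.e. $v$ on no cycle) make the instance trivially negative, which I would dispose of first.

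Next I would get the terminals to form an independent set with no common neighbours. After the above gadget each terminal $v'$ has exactly the two neighbours $v''$ and \dots wait — $v'$ still has all of $u_1,\dots,u_d$ as neighbours, so this is not yet degree two. So I would instead use the standard subdivision trick: for the terminal $v$, keep $v$ but subdivide every edge incident to $v$ once, i.e. replace each edge $vu_i$ by a path $v\,w_i\,u_i$ through a fresh vertex $w_i$; this leaves $d(v)=d$ still, so that alone does not reduce the degree either. The correct approach is the combination: replace $v$ by a small \emph{cycle gadget}. Concretely, I would replace $v$ by a cycle $C_v$ on $2d$ fresh vertices $a_1, b_1, a_2, b_2, \ldots, a_d, b_d$ in this cyclic order, attach $u_i$ to $b_i$ (one new edge each), and — to mark the terminal — it suffices to put \emph{one} new vertex $v^*$ adjacent to exactly two consecutive vertices of $C_v$, say $a_1$ and $b_1$, and declare $K' := (K \setminus \{v\}) \cup \{v^*\}$. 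Then $d(v^*) = 2$. A cycle through $v^*$ must traverse $v^*$ between $a_1$ and $b_1$, hence must use a contiguous arc of $C_v$; entering $C_v$ via $b_i$ and leaving via $b_j$ and passing through $v^*$ forces $i \neq j$, and contracting $C_v \cup \{v^*\}$ back to a point recovers a $v$-cycle. Conversely any $v$-cycle through neighbours $u_i, u_j$ ($i\neq j$) routes through the arc of $C_v$ from $b_i$ to $b_j$ containing $a_1 b_1 v^*$. Since the only terminals are the $v^*$'s, and distinct $v^*$ live in disjoint gadgets with a single private neighbour apiece, $K'$ is independent and its members share no neighbour.

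With the gadgets in hand the remaining work is bookkeeping. I would prove the equivalence $(G,K)\in\textsc{$K$-Cycle} \iff (G',K')\in\textsc{$K$-Cycle}$ by the contraction/expansion correspondence described above, applied simultaneously to all terminals (the gadgets are vertex-disjoint and each touches $G$ only through edges to the original neighbours, so the local arguments compose). The parameter is unchanged, $|K'| = |K|$, since each gadget contributes exactly one new terminal $v^*$ and removes one old terminal $v$; the construction clearly runs in polynomial time and blows up $n$ by only a constant factor per terminal (at most $\Oh(\sum_{v\in K} d(v))$ new vertices), which is fine for a kernelization-style preprocessing. I would also note that if some $v\in K$ lies on no cycle at all — detectable in polynomial time — we immediately answer ``no'', so we may assume every terminal has degree at least two before applying the gadget.

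The main obstacle is getting the gadget \emph{exactly} right so that a cycle is forced to behave: the degree-two terminal $v^*$ constrains which two edges of the gadget are used, but one must check that this genuinely prevents ``short-circuiting'' (e.g. a cycle that enters and leaves the gadget at the same neighbour $u_i$, or that uses $v^*$ without threading a full arc of $C_v$), and that in the expansion direction a $v$-cycle using two \emph{distinct} neighbours always has a valid routing through the gadget arc containing $v^*$ — which is where the cyclic arrangement of the $a_i,b_i$ and the placement of $v^*$ on a single edge $a_1b_1$ (rather than at a vertex) matters. Verifying this case analysis carefully, and confirming it interacts correctly with simplicity of $G$ (no multi-edges, no loops) so that the forbidden short-circuits cannot arise, is the crux; the rest is routine.
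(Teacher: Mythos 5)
There is a genuine gap in the backward direction of your final (cycle-gadget) construction. You correctly abandoned your first gadget because the new terminal did not have degree two, but note that the lemma only requires the \emph{terminals} to have degree two, not the auxiliary vertices; the paper's fix is simply to keep two full copies $v',v''$ of $v$ (each adjacent to all of $N(v)$, hence of high degree) and to insert a \emph{third}, fresh vertex of degree two between them as the new terminal. That gadget is ``tight'': a cycle through the degree-two terminal must use the segment $u\,\text{--}\,v'\,\text{--}\,v\,\text{--}\,v''\,\text{--}\,w$ with $u,w\in N_G(v)$, which consumes all three gadget vertices, so the cycle cannot re-enter the gadget and the contraction yields a simple $K$-cycle. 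Your cycle gadget $C_v$ on $2d$ vertices is not tight: a $K'$-cycle can use the arc of $C_v$ through $v^*$ \emph{and also} one or more disjoint arcs of $C_v$ between other attachment vertices $b_i$. Contracting the gadget then yields a closed walk in $G$ that visits $v$ more than once --- a figure-eight, not a simple cycle --- and such a walk through all terminals does not imply a $K$-cycle.

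Concretely, let $G$ be two triangles $v a w_1$ and $v b w_2$ sharing only the vertex $v$, with $K=\{v,w_1,w_2\}$. The only simple cycles of $G$ are the two triangles, so there is no $K$-cycle. But after your construction (with $N(v)=\{u_1,u_2,u_3,u_4\}=\{a,w_1,b,w_2\}$ and $C_v=a_1b_1a_2b_2a_3b_3a_4b_4$), the paths $b_1\,\text{--}\,v^*\,\text{--}\,a_1\,\text{--}\,b_4$ and $b_2\,\text{--}\,a_3\,\text{--}\,b_3$ are vertex-disjoint, and splicing them together with the (degree-two, hence unproblematic) gadgets of $w_1$ and $w_2$ and the vertices $a,b$ yields a simple cycle in $G'$ through $v^*$, $w_1^*$ and $w_2^*$. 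So $(G',K')$ is a yes-instance while $(G,K)$ is not, and the reduction is not equivalence-preserving. You flagged exactly this ``short-circuiting'' verification as the crux but did not carry it out; it in fact fails. The repair is essentially to return to your first idea with the paper's three-vertex variant, whose correctness proof is then immediate.
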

\begin{proof}
We assume that $K$ is an independent set in $G$ (by subdividing edges within $K$, 
if necessary). Construct $G'$ from $G$ by replacing every terminal $v \in K$ by two
non-adjacent copies $v', v''$ (with neighbourhoods identical to that of~$v$). Create
a new vertex $v$ with $N(v)=\{v',v''\}$. The new terminal set $K'$ consists of these new
vertices $v$. 

It is easy to show that this reduction maintains the solution status. On the one hand, for
any $K$-cycle in $G$, we may replace each portion $u-v-w$ of the cycle, with $v \in K$ and
hence $u, w \notin K$, by a path $u-v'-v-v''-w$, hitting the new terminal $v$. On the
other hand, any $K'$-cycle in $G'$ must pass through both neighbours $v', v''$ of each
terminal $v \in K'$, and these neighbours are distinct for all terminals. Thus if each
segment $v'-v-v''$ of the $K'$-cycle in $G'$ is contracted into $v$, we get a valid
$K$-cycle in $G$. 
\end{proof}

The requirement that $|K|>1$ comes from the consideration of whether a single edge $uv$
should be considered a $K$-cycle with $K=\{u\}$ (but the case~$|K|=1$ is in either case
easily solvable in polynomial time).

For the rest of the paper, for convenience, we will let $G=(V,E)$ be a graph, on vertex
set $V=\{v_1, \ldots, v_n\}$ and terminal set $K=\{v_1, \ldots, v_k\}$, to which the above
reduction has already been applied. We also assume $N(v_i)=\{v_{k+2i-1}, v_{k+2i}\}$ for
$i\in [k]$.

\subsection{Matrix construction}

We now show the matrix which will encode the existence of a~$K$-cycle. We begin with a
more intuitive construction, that implies a running time of~$\Oh^*(4^k)$, then modify it
to arrive at the~$\Oh^*(2^k)$-time algorithm and polynomial compression. 

Given a graph~$G$, reduced as per the previous subsection, we define the matrix~$A_G$ as
follows. We start from the Tutte matrix~$A_G$ of~$G$ (although, as the field is of
characteristic two, we will not observe the signs), and adjust so that~$A(i,i)=1$
for~$i>3k$ (effectively adding self-loops to all vertices except~$N[K]$). Finally, we
orient the edges incident to~$v_1$ to make~$v_1$-cycles non-reversible: let~$A(1,k+1)$
and~$A(k+2,1)$ be unmodified, but set~$A(1,k+2)=A(k+1,1)=0$. This can be done safely, as
any~$K$-cycle of~$G$ can be oriented in either direction. 

Let~$M_G$ denote the resulting matrix. We can detect a~$K$-cycle in~$G$ as follows.

\begin{theorem}
\label{th:fourtok}
Let~$T=\{x_{i,k+2i-1}, x_{i,k+2i}: i \in [k]\}$ and~$t=\prod_{x \in T} x$. Then~$G$ has
a~$K$-cycle if and only if~$\det M_G$, viewed as a polynomial, contains a monomial~$m$
with non-zero coefficient such that~$t$ divides~$m$. 
\end{theorem}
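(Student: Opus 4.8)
The plan is to establish the equivalence by analyzing which terms of $\det M_G$ survive the cancellation of reversible cycle covers (Proposition~\ref{prop:reverse}), and then characterize the surviving terms that are divisible by $t$ in graph-theoretic terms. First I would recall the dictionary between terms of $\det M_G = \perm M_G$ and cycle covers of the digraph $D$ associated with $M_G$, via Proposition~\ref{prop:cyclecovers}: non-zero terms correspond exactly to oriented cycle covers of $G$ together with self-loops on vertices outside $N[K]$. By Proposition~\ref{prop:reverse}, any cycle cover containing a reversible cycle contributes a term that cancels, so the coefficient of a monomial $m$ is (up to parity) the number of \emph{non-reversible} cycle covers realizing $m$. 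The key structural point is that a cycle of length $\geq 3$ is reversible unless it uses an edge where the entries differ, which by construction happens only at $v_1$ (where we zeroed out $A(1,k+2)$ and $A(k+1,1)$); a cycle of length $2$ uses an isolated edge; and a self-loop is a cycle of length $1$. So a non-reversible cycle cover consists of: self-loops (on vertices with $A(i,i)=1$), isolated edges (length-two cycles $v_iv_j$ with $v_iv_j \in E$), and \emph{at most one} long cycle, which must pass through $v_1$ and traverse it in the direction $v_1 \to v_{k+1}$ and $v_{k+2} \to v_1$.

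Next I would analyze divisibility by $t = \prod_{i\in[k]}(x_{i,k+2i-1}\, x_{i,k+2i})$. Recall that after the preprocessing of Lemma~\ref{lemma:degtwo}, each terminal $v_i$ has exactly the two neighbours $v_{k+2i-1}, v_{k+2i}$, so the only matrix entries involving the variables $x_{i,k+2i-1}$ and $x_{i,k+2i}$ are those in row/column $i$ corresponding to these two edges. A monomial $m$ from a non-reversible cycle cover $\C$ is divisible by both $x_{i,k+2i-1}$ and $x_{i,k+2i}$ if and only if $\C$ uses both edges at $v_i$; since $v_i$ has in- and out-degree exactly $1$ in $\C$, using both incident edges forces $v_i$ to lie on a genuine cycle through $v_{k+2i-1}$ and $v_{k+2i}$ (not a self-loop, and not a length-two cycle — there are no length-two cycles at $v_i$ since $v_i$'s two neighbours are distinct and $K$ is independent with no common neighbours). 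Thus $t \mid m$ forces every terminal $v_1, \ldots, v_k$ to lie on the unique long cycle of $\C$; that long cycle is then precisely a $K$-cycle of $G$. Conversely, given a $K$-cycle $C$ in $G$, orient it so that it leaves $v_1$ towards $v_{k+1}$ (possible as $N(v_1)=\{v_{k+1},v_{k+2}\}$; if it leaves towards $v_{k+2}$, reverse the whole cycle), extend it to a cycle cover by placing self-loops on all vertices outside $C$ that have self-loops available, and observe that \emph{every vertex outside $C$ is outside $N[K]$} — because each terminal has degree exactly two, a $K$-cycle through $v_i$ must use both of $v_i$'s edges, hence must also pass through $v_{k+2i-1}$ and $v_{k+2i}$ — so the self-loops are indeed all present and the construction yields a valid cycle cover.

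Having built this cycle cover $\C_0$ from $C$, I would argue it is non-reversible: its long cycle $C$ is non-reversible because it uses the edge $v_1 v_{k+1}$ in the oriented direction for which $A(1,k+1) \neq A(k+1,1)$ (we set $A(k+1,1)=0$), and all other parts of $\C_0$ are self-loops, which are never reversible. Therefore $\C_0$ survives into $\det M_G$. It remains to rule out unwanted cancellation: the monomial $m_0$ produced by $\C_0$ might in principle receive contributions from several non-reversible cycle covers, and we must ensure they do not all cancel. Here I would use the variable $x_{1,k+1}$: the oriented edge $v_1\to v_{k+1}$ is forced (with direction) in every surviving cycle cover by the zeroing, and once we know $m$ is divisible by $t$, the long cycle is forced to visit all terminals, so $m$ determines the long cycle's edge set; combined with the fact that every non-long-cycle part is forced to be a self-loop (all remaining vertices lie outside $N[K]$ and hence carry $A(i,i)=1$, contributing the constant $1$, with no freedom), the cycle cover $\C$ realizing a given $t$-divisible monomial $m$ is \emph{unique}. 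Hence the coefficient of $m_0$ is exactly $1$ in characteristic two, so it is non-zero. The main obstacle is precisely this uniqueness/non-cancellation argument — one must be careful that two different non-reversible cycle covers cannot yield the same $t$-divisible monomial, and the crux is that divisibility by $t$ pins down the long cycle while the degree-two structure of terminals forces all leftover vertices out of $N[K]$ and thus onto forced self-loops with no competing choices.
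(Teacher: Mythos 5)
Your proof is correct and follows essentially the same route as the paper's: the correspondence between surviving monomials and non-reversible cycle covers, the observation that the only non-reversible cycles are loops, isolated edges, and the forcibly oriented $v_1$-cycle, and the degree-two structure of the terminals to translate divisibility by~$t$ into the existence of a $K$-cycle. If anything you are more explicit than the paper on the padding and non-cancellation steps; the only imprecision is the claim that leftover vertices are ``forced onto self-loops with no freedom''---they could instead be covered by isolated edges---but any such cover contributes squared variables and hence a different monomial, so the cycle cover realizing your squarefree monomial $m_0$ is still unique and the argument stands.
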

\begin{proof}
Recall the summation~(\ref{eqn:det}) and the notion of a reversible cycle from
Section~\ref{sec:prel}. We claim a one-to-one correspondence between non-zero monomials
of~$\det M_G$ and non-reversible cycle covers. 

This follows from basic observations, but we prove it for completeness. 
Since~(\ref{eqn:det}) is already in sum-product form, every non-zero monomial of~$\det
M_G$ corresponds to a non-empty set of summands from~(\ref{eqn:det}). By
Prop.~\ref{prop:reverse}, we get the same result if we restrict ourselves to those
summands corresponding to non-reversible cycle covers. We show that two summands,
corresponding to distinct non-reversible cycle covers~$\C, \C'$, always produce distinct
monomials: if~$\C$ and~$\C'$ use distinct sets of underlying undirected, non-loop edges,
then the claim is clear, and the set of loop edges of a cycle cover is a function of the
set of non-loop edges. In the remaining case,~$\C'$ must be attainable by a reorientation of~$\C$.
However, there are by construction only three types of non-reversible cycles: loops,
isolated edges, and~$v_1$-cycles, where a~$v_1$-cycle cannot be reversed, and loops and
isolated cycles are invariant under reversal. Thus~$\C$ and~$\C'$ must produce distinct
monomials. 

The result is now simple. First, if~$C$ is a~$K$-cycle in~$G$, then it contributes all
factors in~$t$, and by padding~$C$ using self-loops we produce a non-reversible cycle
cover, which produces a non-zero monomial of~$\det M_G$. On the other hand, if a non-zero
monomial in~$\det M_G$ contains the factor~$x_{i,k+2i-1}x_{i,k+2i}$, then in the
corresponding cycle cover, the~$v_1$-cycle most also pass through~$v_i$, as such a
factor cannot be contributed by loops and isolated edges.  By induction, if a non-zero
monomial in~$\det M_G$ is divided by~$t$, then the corresponding cycle cover contains
a~$v_1$-cycle which passes through every vertex of~$K$, i.e., a~$K$-cycle. 
\end{proof}

As~$|T|=2k$, this implies an~$\Oh(2^{2k})$-time randomized algorithm for the problem,
via~Lemma~\ref{lm:polypie} and by evaluating the resulting polynomial~$Q$ randomly over
GF$(2^\ell)$ for~$\ell=\Omega(\log n)$. We will improve this in two ways: by
introducing a modification which will let us match the~$\Oh^*(2^k)$ running time
of~\cite{BjorklundHT12}, and by showing how to use Gaussian elimination and partial random
evaluation to produce a polynomial compression. 

\subsection{A $\mathbf{2^k}$ Algorithm}

We now show a different way to determine the existence of a~$K$-cycle from~$M_G$. 
Let an \emph{orientation of~$M_G$} be the result of, for every~$v_i \in K$, $i>1$, either
setting~$A(k+2i-1,i)=A(i,k+2i)=0$ or~$A(k+2i,i)=A(i,k+2i-1)=0$, i.e., orienting the edges
incident to~$v_i$ either as~$v_i' \rightarrow v_i \rightarrow v_i''$ or as~$v_i'
\leftarrow v_i \leftarrow v_i''$. We claim the following.

\begin{theorem}
\label{th:reorient}
Let~$Q'$ be the sum of~$\det M_G'$ over all~$2^{k-1}$ orientations~$M_G'$ of~$M_G$. 
Then~$G$ has a~$K$-cycle if and only if~$Q'$ is not identically zero.
\end{theorem}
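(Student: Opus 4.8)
The plan is to track, through the orientation-summation and the determinant expansion, which cycle covers survive. Fix an orientation $M_G'$ of $M_G$. As in the proof of Theorem~\ref{th:fourtok}, the non-reversible cycle covers of the associated directed graph are built from loops (on vertices outside $N[K]$), isolated edges, and cycles through $v_1$; the only difference now is that each terminal $v_i$ with $i>1$ has its two incident edges oriented one way or the other, so a given orientation $M_G'$ admits a $v_1$-cycle through $v_i$ only if the chosen orientation at $v_i$ is consistent with the direction in which that cycle traverses $v_i$. First I would argue that in $\det M_G'$ (computed over GF$(2^\ell)$, or symbolically) the surviving monomials correspond exactly to the non-reversible cycle covers of that oriented graph, by the same cancellation-of-reversible-cycles argument (Prop.~\ref{prop:reverse}) and the same ``distinct cycle covers give distinct monomials'' argument used before — here it is even cleaner, since after orienting all of $N[K]$, cycles through terminals are rigid and cannot be reoriented.

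Next I would analyze the sum $Q' = \sum_{M_G'} \det M_G'$ over all $2^{k-1}$ orientations. Group the contributions by the underlying (unoriented) cycle cover $\C$ of the \emph{unoriented} graph $G$ (plus self-loops). A cycle cover $\C$ that contains \emph{no} $v_1$-cycle hitting all of $K$ will, for each terminal $v_i$ it does touch inside a short cycle, be compatible with orientations at $v_i$ regardless; but the key point is that if $\C$'s $v_1$-cycle misses some terminal $v_j$, then at $v_j$ the edges $v_j' v_j, v_j v_j''$ are \emph{both} used by short components (isolated edges, since $d(v_j)=2$), and those isolated edges are invariant under the orientation choice at $v_j$ in the sense that... — wait, that is not quite it. Let me instead set it up correctly: for each terminal $v_i$, $i>1$, the two incident edges are a matched pair, and in \emph{any} non-reversible cycle cover either both lie on the $v_1$-cycle (traversed in one consistent direction) or both form isolated edges. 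In the former case exactly one of the two orientations of $M_G'$ at $v_i$ is compatible; in the latter case, an isolated (undirected) edge corresponds, in a directed cycle cover, to a $2$-cycle using both arcs, so it needs $A(k+2i-1,i)$ \emph{and} $A(i,k+2i-1)$ (say) both nonzero — but an orientation zeroes one of each pair, so neither orientation supports an isolated edge at $v_i$! This forces: the surviving non-reversible cycle covers of $M_G'$ are precisely those whose $v_1$-cycle passes through \emph{every} terminal, i.e., the $K$-cycles (padded by self-loops elsewhere), and for such a cycle-cover exactly one orientation $M_G'$ supports it, namely the one matching the cycle's traversal direction at each terminal.

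Putting this together: every $K$-cycle $C$ of $G$ contributes its (unique, up to self-loop padding of the remaining vertices) monomial to $\det M_G'$ for exactly one orientation $M_G'$ — so it is counted exactly once in $Q'$ and, working over GF$(2^\ell)$ (characteristic two), does not get cancelled by a companion; more care is needed because distinct $K$-cycles could a priori share a monomial, but by the same edge-set-determines-monomial argument as in Theorem~\ref{th:fourtok} they do not, and anyway a single surviving term already witnesses non-vanishing. Conversely, if $G$ has no $K$-cycle, then by the analysis above \emph{no} orientation $M_G'$ has any surviving non-reversible cycle cover, so each $\det M_G'$ is identically zero and hence $Q' \equiv 0$. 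This gives the ``if and only if.'' The main obstacle I anticipate is the bookkeeping in the middle step — correctly showing that an isolated edge at a terminal is \emph{killed} by every orientation (so that ``missing a terminal'' is impossible for a surviving cover), and that the compatible orientation is unique for a genuine $K$-cycle; once that dichotomy at each terminal is nailed down, the rest is the routine ``surviving monomials $\leftrightarrow$ $K$-cycles'' correspondence already established for Theorem~\ref{th:fourtok}. Finally, $Q'$ is itself a determinant-sum of polynomials of degree $\le n$, so Schwartz-Zippel over GF$(2^\ell)$ with $\ell = \Omega(\log n)$ turns non-vanishing into a randomized test, and since there are $2^{k-1}$ orientations each costing a polynomial-time determinant evaluation, the running time is $\Oh^*(2^k)$.
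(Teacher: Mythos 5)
There is a genuine gap in the middle of your argument, at the ``dichotomy at each terminal'' step. You claim that in any surviving cycle cover of an orientation $M_G'$, each terminal $v_i$ either lies on \emph{the $v_1$-cycle} or is covered by an isolated edge, and since orientations kill isolated edges at terminals you conclude that every surviving cover contains a $K$-cycle — hence that $\det M_G'\equiv 0$ for every orientation whenever $G$ has no $K$-cycle. That conclusion is false. You are missing a third case: a terminal $v_i$ (for $i>1$) can lie on a cycle of length at least three that does \emph{not} pass through $v_1$ (e.g., a cycle covering only $v_2$ and $v_3$, with $v_1$ on a separate cycle of its own). Such a cycle is non-reversible in $M_G'$ — exactly because the orientation at its terminals makes reversal produce a zero term — so it \emph{survives} in $\det M_G'$ and contributes a non-zero monomial even when $G$ has no $K$-cycle. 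Your argument, taken at face value, would prove that each individual $\det M_G'$ already detects $K$-cycles, which would make the summation over orientations pointless; the summation is in fact essential precisely to dispose of these multi-cycle covers.

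The paper's proof handles this by a parity argument that your proposal never performs: group the contributions of $Q'$ by the underlying \emph{unoriented} cycle cover $\C^*$ in which $K$ is covered by cycles of length at least three (your observation that isolated edges and loops at terminals are impossible is correct and is used here). Within one cycle of $\C^*$, fixing the traversal direction at one vertex fixes it everywhere, so the cycle containing $v_1$ admits exactly one compatible orientation, while every other cycle incident on $K$ admits exactly two. Hence $\C^*$ is supported by $2^{c}$ orientations, where $c$ is the number of cycles of $\C^*$ incident on $K$ but not containing $v_1$; each such orientation contributes the identical monomial, so over characteristic two the contribution of $\C^*$ to $Q'$ vanishes unless $c=0$, i.e., unless the single cycle through $v_1$ covers all of $K$. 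Your forward direction (a $K$-cycle is compatible with exactly one orientation and distinct unoriented covers yield distinct monomials, so it survives in $Q'$) is fine; it is the converse that needs the cancellation-across-orientations count, and that is the step missing from your write-up.
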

\begin{proof}
Let~$M_G'$ be an arbitrary orientation of~$M_G$. As in Theorem~\ref{th:fourtok}, monomials
of~$\det M_G'$ correspond to non-reversible cycle covers, but now, every cycle incident on
some~$v_i \in K$ counts as non-reversible (and again, attempting to reverse such a cycle
produces a zero-term). On the other hand, if two orientations~$M_G'$ and~$M_G''$ contain
non-reversible cycle covers~$\C'$ and~$\C''$ such that~$\C''$ can be obtained by
reorienting~$\C'$, then~$\C'$ and~$\C''$ contribute identical monomials to the sum, and
their contribution may cancel. Thus, let~$\C^*$ be an \emph{unoriented} cycle cover, such
that every cycle in~$\C^*$ is either a loop, an isolated edge, or a cycle incident on~$K$,
and such that~$K$ is covered entirely by the latter type of cycles. We will count the
number of contributions of orientations of~$\C^*$ to the sum. 

For this, simply observe that in a single cycle~$C$ of~$\C^*$, as soon as the orientation
of at least one vertex of~$C$ has been determined, the direction taken through every other
vertex of~$C$ is fixed as a consequence. Thus, if~$\C^*$ contains a~$K$-cycle~$C$, then
only one orientation~$M_G'$ is possible, as the vertex~$v_1$ enforces a direction already
in~$M_G$. On the other hand, if~$\C^*$ contains at least two cycles incident on~$K$, then
all cycles \emph{not} incident on~$v_1$ may be oriented arbitrarily, making for an even
number of orientations, each one of which contributes the same monomial to the sum. 

Thus non-zero monomials of~$Q'$ correspond to~$K$-cycles in~$G$, as promised. 
\end{proof}

This construction brings our algorithm closer in spirit to the determinant sums of
Bj\"orklund~\cite{Bjorklund10a,Bjorklund10b}, or the algebraic FPT algorithms of Cygan et
al.~\cite{CyganNPPvRW11}. However, as the next subsection shows, by bringing the algorithm
back into the structure of deciding properties of the determinant polynomial of a single
matrix, we get a randomized polynomial compression for~\textsc{$K$-Cycle} via Gaussian
elimination. 

\subsection{Polynomial Compression}

Now, we finally show how to use the above for a polynomial compression of
the~\textsc{$K$-Cycle} problem. 

We describe one final modification of the matrix~$M_G$. For every~$v_i \in K$,~$i>1$, we
introduce a new variable~$a_i$, and multiply~$A(k+2i-1,i)$ and~$A(i,k+2i)$
by~$a_i$, and~$A(k+2i,i)$ and~$A(i,k+2i-1)$ by~$1-a_i$. Observe that this implies that the
algorithm of Theorem~\ref{th:reorient} can be executed by iteratively setting each~$a_i$
to either~$1$ or~$0$, and computing the determinant each time. Strictly speaking,
each individual determinant computation would then seem to require a fresh dose of
randomness, via the Schwartz-Zippel evaluation step, making the approach inappropriate for
kernelization. We show that it is possible to perform this in the alternate direction,
first randomly evaluating every variable~$x_e$ for~$e \in E(G)$, then performing Gaussian
elimination into a compressed output, and finally (at some future time) performing
the~$2^{k-1}$ assignments to the variables~$a_i$ and computing the resulting
determinants. 

\begin{theorem}
The \textsc{$K$-Cycle} problem has a randomized polynomial compression of size~$\Oh(k^3)$.
\end{theorem}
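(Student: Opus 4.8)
The plan is to execute the algorithm of Theorem~\ref{th:reorient} in a ``delayed'' fashion that separates the randomness from the terminal-dependent computation. Recall that $M_G$ now has two kinds of variables: the edge variables $x_e$ for $e\in E(G)$ (roughly $O(n^2)$ of them, but each appearing with degree one, so $\det M_G$ has degree at most $n$ in these), and the $k-1$ new ``switch'' variables $a_i$. Crucially, each $a_i$ appears in exactly two entries of the matrix (as $a_i$) and $1-a_i$ in two further entries, all in the two rows/columns indexed by $v_i$'s neighbours; so $\det M_G$ is \emph{multilinear} in the $a_i$, i.e.\ affine-linear in each $a_i$ separately. By Theorem~\ref{th:reorient}, $G$ has a $K$-cycle iff the sum $Q'$ of $\det M_G$ over the $2^{k-1}$ settings of the $a_i\in\{0,1\}$ is not identically zero as a polynomial in the $x_e$.

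First I would pick $\ell=\Theta(\log n)$ and substitute independent uniform values $r_e\in\mathrm{GF}(2^\ell)$ for the edge variables $x_e$. This yields a matrix $B$ over $\mathrm{GF}(2^\ell)$ whose entries are affine-linear polynomials in $a_2,\dots,a_k$, with total coding length $O(n^2\log n)$ — too large, but that is what Gaussian elimination will fix. The key point is that $\det B$ is a polynomial of degree at most $k-1$ in the $a_i$, so by Schwartz--Zippel (applied in the $x_e$, over a field of size $n^{\Theta(1)}$, with the random evaluation dominating the degree-$n$ determinant polynomial) the quantity $Q'(r_e) = \sum_{a\in\{0,1\}^{k-1}} \det B(a)$ is, with high probability, zero iff $Q'\equiv 0$, hence iff $G$ has no $K$-cycle. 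So it suffices to output a small object from which $\sum_{a}\det B(a)$ can be recovered.

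Next I would compress $B$ by Gaussian elimination over the function field $\mathrm{GF}(2^\ell)(a_2,\dots,a_k)$ — equivalently, by row operations that treat the $a_i$ symbolically. Only $O(k)$ rows and columns of $B$ actually involve the $a_i$ (the $2k$ rows/columns indexed by $N(K)$); all other rows are constant vectors over $\mathrm{GF}(2^\ell)$. Perform elimination to clear all of $B$ except a submatrix of dimension $O(k)\times O(k)$ whose entries are low-degree (in fact affine) rational functions of the $a_i$; each such entry has numerator and denominator of coding length $O(k\log n)$, since clearing a constant block below a pivot only ever multiplies/adds a bounded number of the affine entries and common denominators stay of degree $O(k)$. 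Track the product of pivots (a scalar in $\mathrm{GF}(2^\ell)$, or a bounded rational function) as the elimination proceeds. The upshot is a matrix $A$ of size $O(k)\times O(k)$ over $\mathrm{GF}(2^\ell)$, with entries of bit-length $O(k\log n)$ — wait: that is $O(k^2\cdot k\log n)=O(k^3\log n)$, still a $\log n$ factor off.

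To remove the stray $\log n$ I would, after elimination, perform one more random evaluation: the entries of the reduced $O(k)\times O(k)$ matrix are rational functions in the $a_i$ of degree $O(k)$, so I can re-randomize over a field $\mathrm{GF}(2^{\ell'})$ with $\ell' = \Theta(\log k)$ — large enough that summing $\det$ over the $2^{k-1}$ Boolean assignments (a polynomial identity of degree $O(k)$) is still faithfully tested by Schwartz--Zippel — obtaining a final matrix $A$ whose $O(k^2)$ entries each have $O(k)$ bits, for total size $O(k^3)$; the output of the compression is $A$ together with the scalar pivot product and the distinguished index set, and the decision rule is ``$G$ has a $K$-cycle iff $\sum_{a\in\{0,1\}^{k-1}}\det A(a)\ne 0$.'' The main obstacle is the bookkeeping in the Gaussian-elimination step: one must argue that eliminating against the (many) constant rows never blows up the degree in the $a_i$ beyond $O(k)$ and that the denominators accumulated are controlled — this works precisely because the $a_i$ live in only $O(k)$ pivot rows, so only $O(k)$ ``symbolic'' pivots are ever taken, and the remaining $n-O(k)$ pivots are over the ground field and contribute nothing to the $a_i$-degree. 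Once that invariant is in place, the size bound and the two-stage Schwartz--Zippel argument are routine.
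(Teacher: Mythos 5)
Your overall architecture matches the paper's: fix the edge variables $\mathbf{x}$ by a single Schwartz--Zippel evaluation over GF$(2^\ell)$ applied to the $2^{k-1}$-fold sum $Q$, then use Gaussian elimination to shrink the matrix while keeping the $a_i$ symbolic, and defer the $2^{k-1}$ determinant evaluations to the decompression stage. However, there are two substantive problems. First, the step you flag as ``the main obstacle'' --- controlling degrees and denominators when eliminating over the function field GF$(2^\ell)(a_2,\dots,a_k)$ --- is left open in your write-up, and the paper's resolution is different and cleaner: the $a_i$ occur only in the top-left $3k\times 3k$ block (all four entries touched by $a_i$ have both indices at most $3k$), and the bottom-right $(n-3k)\times(n-3k)$ block is non-singular w.h.p.\ (its symbolic determinant contains the monomial coming from the all-ones diagonal). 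So one diagonalizes that constant block using only constant pivots and then clears the two off-diagonal blocks, which are themselves constant; the only effect on the symbolic block is the addition of constants, so every entry stays of the exact form $a_i\cdot c + c'$ with $c,c'\in$ GF$(2^\ell)$. No symbolic pivot is ever taken, no rational functions arise, and each entry costs $\Oh(\ell)$ bits --- not the $\Oh(k\log n)$ you budget for degree-$k$ rational functions.

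Second, your fix for the stray $\log n$ factor is unsound. After the first evaluation the entries are concrete elements of GF$(2^\ell)$ (plus affine terms in the $a_i$); there is no remaining polynomial identity in the $x_e$ to which Schwartz--Zippel could be applied, and there is no ring homomorphism from GF$(2^\ell)$ onto GF$(2^{\ell'})$ with $\ell'=\Theta(\log k)$ that would let you ``re-randomize'' fixed field elements while preserving whether $\sum_a \det A(a)\neq 0$. (Your arithmetic also does not close: entries over GF$(2^{\Theta(\log k)})$ would have $\Theta(\log k)$ bits, not the $\Oh(k)$ you claim.) The paper's fix is the standard observation that one may assume $k\geq\log n$, since otherwise the $\Oh^*(2^k)$ algorithm already runs in polynomial time; hence $\ell=\Theta(\log n)=\Oh(k)$, and the $3k\times 3k$ matrix with $\Oh(k)$-bit entries has total size $\Oh(k^3)$ with no second evaluation needed.
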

\begin{proof}
We get the result in two steps, first showing that we can randomly evaluate the
variables~$\mathbf{x}$ while leaving~$\mathbf{a}$ as indeterminates,
then applying Gaussian elimination to produce a smaller matrix with the same determinant
(viewed as a polynomial in~$\mathbf{a}$). 

Let~$P(\mathbf{x}, \mathbf{a})$ be the determinant polynomial of~$M_G$.
Define~$Q(\mathbf{x})$ to be the sum over the~$2^{k-1}$ instantiations of~$\mathbf{a}$
necessary to emulate the algorithm of Theorem~\ref{th:reorient};
observe that~$Q(\mathbf{x})$ is a polynomial of degree~$n$, and that~$Q(\mathbf{x})$ is
identically zero if and only if~$G$ has no~$K$-cycle. Thus, again by Schwartz-Zippel,
we may instantiate~$\mathbf{x}$ randomly from GF$(2^\ell)$, and with probability at
least~$n/2^\ell$ the resulting values are such that the~$2^{k-1}$-sized evaluation
of~$Q(\mathbf{x})$ would return non-zero. Picking~$\ell=\Theta(\log n)$ is sufficient for
this step to succeed with polynomial probability in~$n$ (and with~$\ell=\Theta(\log n+k)$, 
we get a failure rate still polynomial in~$n$, but exponentially small in~$k$). 
Note that by standard observations we may assume~$k \geq \log n$, as otherwise the
$\Oh^*(2^k)$-algorithm runs in polynomial time. 

Now, observe that we only need to know the existence of the polynomial~$Q$ for the above
correctness argument. Thus, by replacing~$\mathbf{x}$ randomly by values from
GF$(2^{\ell})$, we get a matrix~$M'$ with mostly concrete values, and indeterminates in
the top-left~$3k \times 3k$ corner, such that preserving~$\det M'$ is sufficient (up to
the failure probability in the previous step) for preserving the information of
whether~$G$ has a~$K$-cycle.

Next, recall that row and column operations preserve the determinant of a matrix exactly. 
We show that we can reduce~$M'$ to a blocks form
\[
M' = \left(
\begin{array}{cc}
A & 0 \\
0 & C \\
\end{array}
\right),
\]
where~$C$ is a matrix without indeterminates. Thus we will have~$\det M'=(\det A)(\det C)$ 
where~$\det C$ is a constant. 

This is easy. For sets~$R, C \subseteq [n]$, let~$M[R,C]$ denote the induced submatrix
of~$M$ with rows~$R$ and columns~$C$. 
Observe that the submatrix~$M_G[[3k+1,n],[3k+1,n]]$ is non-singular, as the diagonal
contributes the term~$1$ to the determinant and every other term will contain at least one
indeterminate. Thus (up to the failure probability),~$M'[[3k+1,n],[3k+1,n]]$ is
non-singular, and can be reduced to diagonal form with a non-zero diagonal, without
introducing any new indeterminate entries in~$M'$. Now we can use further row and column
operations to reduce~$M'[[1,3k],[3k+1,n]]$ and~$M'[[3k+1,n],[1,3k]]$ to all-zero matrices
(thereby modifying the contents of~$M'[[1,3k],[1,3k]]$, but not~$M'[[3k+1,n],[3k+1,n]]$).
This creates the desired blocks form, and every step preserves the determinant precisely
and is performed without further failure probability or growth of the individual entries
(since we are working over a finite field). 

Finally, we consider the resulting contents of the matrix~$A$. Initially, the entries
of~$M'(i,j)$ for~$i,j \leq 3k$ are either constants, or expressions~$a_i \cdot c + c'$ for
some constants~$c, c'$.  Every further row or column operation that modifies 
these entries adds some concrete value~$c'$ to the entry, meaning that we can maintain
these entries in the form~$a_i\cdot c+c'$ where~$c, c'$ are concrete values from
GF$(2^\ell)$; thus the coding length remains~$\Oh(\ell)$ bits per entry.
We then multiply one arbitrary row of~$A$ by~$\det C$, which again only has the effect 
of modifying the values~$c, c'$. This gives us a~$3k \times 3k$ matrix~$A'$, with entries
encoded into~$\Oh(\ell)=\Oh(k)$ bits, such that~$\det A'=\det M'$, where~$M'$ is the
matrix produced by randomly instantiating~$\mathbf{x}$ in~$M_G$. 
\end{proof}

Finally, we remark that, unusually, the output problem is not trivially in \NP (as it is a
question about the outcome of an exponentially large computation). Thus in terms of
parameterized complexity, we do not strictly speaking get a polynomial kernel, as we know
of no way of getting back from the matrix~$A'$ above to an instance of~\textsc{$K$-Cycle}.

\section{Conclusions}

We have shown an alternate algebraic algorithm for the~\textsc{$K$-Cycle} problem,
recasting the original problem into a question about the existence of certain terms in the
determinant polynomial of a matrix with indeterminate entries. By careful application of
partial evaluation and Gaussian elimination, we have shown that this leads to a
polynomial compression of a~\textsc{$K$-Cycle} instance into space~$\Oh(|K|^3)$.
This partially answers the question of the kernelizability of~\textsc{$K$-Cycle}, in a
perhaps surprising direction. 

Although we are not able to produce a proper kernel, since we are not able to get back to
an instance of the~\textsc{$K$-Cycle} problem, such kernel-like polynomial compressions
have been previously considered in parameterized complexity~\cite{AlonGKSY10}, and in fact
all existing frameworks for excluding polynomial kernelization
(e.g.,~\cite{FortnowS11,Drucker12}) also exclude polynomial compressions. 
Thus, for the sake of a smooth theory, we hope that the~\textsc{$K$-Cycle} problem can
also be shown to have a polynomial kernel (e.g., a compression within \NP). 

Another interesting improvement would be a more direct kernel, e.g., based on reduction
rules which make direct modifications to~$G$ and~$K$. The tools required for finding such
rules may well have further applications (perhaps analogously to the two previous
works~\cite{KratschW12a,KratschW12b}). 

It would also be interesting to consider further related problems, perhaps starting with
the problems of finding a \emph{shortest}~$K$-cycle, and a~$K$-cycle with a prescribed
parity, as these problems can also be solved by the approach in~\cite{BjorklundHT12}.
While it seems that our algorithm can be adapted for this setting, it is not clear to us
at the moment whether this can be done in a way that allows for a polynomial compression.

\subparagraph*{Acknowledgements}
The author is grateful to Thore Husfeldt and Stefan Kratsch for rewarding discussions,
and to an anonymous reviewer for suggesting improvements to the paper.

\bibliographystyle{abbrv}
\bibliography{ksetcycle}

\end{document}